\documentclass[a4paper,UKenglish,cleveref, autoref, thm-restate]{lipics-v2021}

\bibliographystyle{plainurl}%

\title{Geometry Matters in Planar Storyplans} %

\titlerunning{Geometry Matters in Planar Storyplans} %

\author{Alexander Dobler}{TU Wien, Vienna, Austria}{adobler@ac.tuwien.ac.at}{https://orcid.org/0000-0002-0712-9726}{Vienna Science and Technology Fund (WWTF)  grant [10.47379/ICT19035]}
\author{Maximilian Holzmüller}{TU Wien, Vienna, Austria}{}{}{}
\author{Martin Nöllenburg}{TU Wien, Vienna, Austria}{noellenburg@ac.tuwien.ac.at}{https://orcid.org/0000-0003-0454-3937}{Vienna Science and Technology Fund (WWTF)  grant [10.47379/ICT19035]}

\authorrunning{M.~Holzmüller, A.~Dobler and M.~Nöllenburg} %

\Copyright{John Q. Public and Joan R. Public} %

\begin{CCSXML}
<ccs2012>
   <concept>
       <concept_id>10003752.10010061.10010063</concept_id>
       <concept_desc>Theory of computation~Computational geometry</concept_desc>
       <concept_significance>300</concept_significance>
       </concept>
   <concept>
       <concept_id>10002950.10003624.10003633</concept_id>
       <concept_desc>Mathematics of computing~Graph theory</concept_desc>
       <concept_significance>300</concept_significance>
       </concept>
   <concept>
       <concept_id>10003120.10003145.10003146.10010892</concept_id>
       <concept_desc>Human-centered computing~Graph drawings</concept_desc>
       <concept_significance>300</concept_significance>
       </concept>
 </ccs2012>
\end{CCSXML}

\ccsdesc[300]{Theory of computation~Computational geometry}
\ccsdesc[300]{Mathematics of computing~Graph theory}
\ccsdesc[300]{Human-centered computing~Graph drawings}

\keywords{geometric storyplan, planarity, straight-line drawing, dynamic graph drawing} %

\relatedversion{} %

\nolinenumbers %

\hideLIPIcs  %

\EventEditors{John Q. Open and Joan R. Access}
\EventNoEds{2}
\EventLongTitle{42nd Conference on Very Important Topics (CVIT 2016)}
\EventShortTitle{CVIT 2016}
\EventAcronym{CVIT} 
\EventYear{2016}
\EventDate{December 24--27, 2016}
\EventLocation{Little Whinging, United Kingdom}
\EventLogo{}
\SeriesVolume{42}
\ArticleNo{23}

\usepackage{amssymb,amsthm}
\usepackage{printlen,cite}
\usepackage[basic]{complexity}

\newtheorem{problem}{Problem}
\newcommand{\probname}[1]{{\normalfont\textsc{#1}}}

\begin{document}
\maketitle

\begin{abstract}
A \emph{storyplan} visualizes a graph $G=(V,E)$ as a sequence of $\ell$ frames $\Gamma_1, \dots, \Gamma_\ell$, each of which is a drawing of the induced subgraph $G[V_i]$ of a vertex subset $V_i \subseteq V$. 
Moreover, each vertex $v \in V$ is contained in a single consecutive sequence of frames $\Gamma_i, \dots, \Gamma_j$, all vertices and edges contained in consecutive frames are drawn identically, and the union of all frames is a drawing of $G$. 
In GD 2022, the concept of \emph{planar storyplans} was introduced, in which each frame must be a planar (topological) drawing. Several (parameterized) complexity results for recognizing graphs that admit a planar storyplan were provided, including \NP-hardness.
In this paper, we investigate an open question posed in the GD paper and show that the geometric and topological settings of the planar storyplan problem differ:
We provide an instance of a graph that admits a planar storyplan, but no planar \emph{geometric} storyplan, in which each frame is a planar straight-line drawing.
Still, by adapting the reduction proof from the topological to the geometric setting, we show that recognizing the graphs that admit planar geometric storyplans remains \NP-hard. 
\end{abstract}

\section{Introduction}
Planar storyplans, introduced in 2022~\cite{BinucciGLLMNS22}, represent an approach to draw possibly dense and non-planar graphs as a sequence of $\ell \ge 1$ planar subdrawings called frames. 
In addition to the planarity condition for each frame, their sequence must satisfy certain visual consistency criteria in order to be a valid planar storyplan: 
(i) each vertex belongs to a single, non-empty interval of frames, (ii) each frame shows the subgraph induced by the vertices of the frame, (iii) the shared vertices and edges of any two consecutive frames have the same geometric representation, and (iv) every edge must be visible in at least one frame, i.e., the incident vertices of every edge must co-occur in some frame. 
Storyplans belong to the class of gradual graph visualizations, which depict a graph in a sequential story-like or unordered small-multiples fashion. 
The goal of such gradual visualizations is to show a complex graph $G$ as a collection of simpler drawings, e.g., planar drawings of subgraphs whose union represents $G$. 
Identical vertices and edges in the individual subdrawings can be mentally linked by requiring that they are always drawn the same if they occur in multiple subdrawings. 
Such visualizations %
find applications both for visualizing dynamic graphs~\cite{BeckBDW14,VehlowBW16}, in which vertices and edges occur over time, and for graphs that are decomposed into multiple simpler subgraphs without a sequential order~\cite{GiacomoDLMT13,GiacomoDLMT14}. 
Storyplans with their temporal sequence of frames are related to drawing graphs in a streaming model. In the storyplan problem, however, one can choose the sequence in which vertices enter and leave the story, whereas this sequence is part of the input in streaming models~\cite{LozzoR19,BinucciBBDGPPSZ09,BinucciBBDGPPSZ12,GoodrichP13a} and in graph stories~\cite{BattistaDGGOPT22,BorrazzoLFP19,BorrazzoLBFP20,BattistaDGGOPT23}.

The authors of~\cite{BinucciGLLMNS22,BinucciGLLMNS24} investigated the complexity of deciding whether a given graph admits a planar storyplan in a topological setting, with edges drawn as simple curves between their endpoints. 
They showed \NP-completeness and fixed-parameter tractability parameterized by the vertex cover number or the feedback edge set number of the graph. 
Furthermore, they proved that every graph of treewidth at most 3 admits a planar storyplan, and that such a planar storyplan (actually with straight-line edges) can be computed in linear time. 
In~\cite{FialaFLWZ24} the investigation of storyplans was extended to outerplanar and forest storyplans, in which each frame is not just planar, but actually outerplanar or a forest. 
They proved a strict containment relation between forest, outerplanar, and planar storyplans and identified graph families that do or do not always admit these storyplans.
In the affirmative case, their storyplans use straight-line edges. 

In this paper, we investigate an open question by several authors~\cite{BinucciGLLMNS22,BinucciGLLMNS24} and explore the differences between planar (topological) storyplans and planar geometric storyplans.
Obviously, every graph that admits a planar geometric storyplan also admits a planar storyplan. 
We prove that the converse is not true by providing a counterexample of a graph with 28 vertices that admits a planar storyplan, but none with straight-line edges. 
The main challenge in our geometric construction is to enforce an obstruction to straight-line visibility for any possible placement of the vertices and for any possible vertex-to-frame assignment.
As a second result, we adapt the hardness proof of from~\cite{BinucciGLLMNS22,BinucciGLLMNS24} to establish that deciding if a given graph admits a planar geometric storyplan remains \NP-hard.

\section{Preliminaries}
For $n\in \mathbb{N}$, we use $[n]=\{1,2,\dots,n\}$ as a shorthand. 
For $a\le b\in \mathbb{N}$, we define $[a,b]=\{a,a+1,\dots,b\}$. 
For a graph $G$, let $V(G)$ be its vertex set and let $E(G)$ be its edge set. 
For $V'\subseteq V(G)$, $G[V']$ is the subgraph of $G$ induced by $V'$. 
Given an induced subgraph $G'$ of some graph $G$, and a vertex of $G$ not contained in $V(G')$, $G'+v$ is defined as $G[V(G')\cup \{v\}]$. 
We denote by $uv$ the edge $\{u,v\}$ connecting vertices $u$ and $v$.

\subparagraph*{The storyplan problem.}
A \emph{storyplan} of a graph $G$ on time steps~$[\ell]$, $\ell\in \mathbb{N}$, consists of a pair $(A,\mathcal{D})$. For each $v\in V(G)$, $A(v)=[s_v,e_v]\subseteq [\ell]$ is the \emph{visible interval} of $v$. We say that $v$ \emph{appears} at $s_v$, is \emph{visible at} each $s_v,s_v+1,\dots,e_v$, and \emph{disappears} after~$e_v$.
It must hold for a pair of adjacent vertices $u,v\in V$ that $A(u)\cap A(v)\ne \emptyset$, i.e., there is a time step in which the edge $uv$ can be drawn. We say in this case that $u$ and $v$ \emph{co-occur}.
For each $t\in [\ell]$, $G[t]=G[\{v\in V \mid t\in A(v)\}]$ is the \emph{frame graph at $t$}. For a subgraph $G'$ of $G$, we say that $G'$ is \emph{visible at $t$} if all of its vertices are visible.
The \emph{drawing function} $\mathcal{D}$ represents a drawing $\mathcal{D}(v)$ of each vertex $v\in V(G)$ and drawing $\mathcal{D}(e)$ of each edge $e\in E(G)$. Further, for each subgraph $G'$ of $G$, $\mathcal{D}(G')$ is the drawing of $G'$, i.e., the drawing function is applied to each vertex and edge of $G'$. For each $t\in [\ell]$, we call $\Gamma_t:=\mathcal{D}(G[t])$ the \emph{frame at $t$}.
For a storyplan, it must hold that each frame represents a simple drawing of the corresponding induced subgraph.
A storyplan is \emph{planar} if each frame corresponds to a planar drawing.
A storyplan is \emph{geometric} (or \emph{straight-line}) planar if each frame corresponds to a planar drawing such that each edge is represented by a straight-line segment.

We consider the following two problems.
\begin{problem}\label{problem:planar}
    Given a graph $G$, does $G$ admit a planar storyplan?
\end{problem}

\begin{problem}\label{problem:geomplanar}
    Given a graph $G$, does $G$ admit a planar geometric storyplan?
\end{problem}

Note that these definitions are slightly different from those in~\cite{BinucciGLLMNS24}. There, each vertex appears in a unique frame, while we permit multiple vertices appearing in a single frame. It is easy to see, though, that these two definitions are equivalent with regard to the existence of a planar (geometric) storyplan for a given graph $G$.

\section{Geometry Matters in Planar Storyplans}
In this section, we show our following main result.
\begin{theorem}\label{thm:mainthm}
    There is a graph that admits a planar storyplan and that does not admit a planar geometric storyplan.
\end{theorem}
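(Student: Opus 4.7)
The plan is to exhibit an explicit witness graph $G$ (the paper hints at one on 28 vertices) and separately establish the two halves: (a) construct a planar topological storyplan of $G$ directly, and (b) prove by contradiction that no planar geometric storyplan of $G$ exists. The design principle for $G$ should be to combine small \emph{rigidity gadgets}, which pin down the combinatorial and geometric shape of any straight-line drawing up to rotation/reflection, with \emph{late-appearing routing vertices} whose edges can always be drawn as simple curves around obstacles but, under straight-line edges, would be forced through regions that have been sealed off by earlier frames.

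For the affirmative direction, I would describe an explicit assignment $A\colon V(G)\to [\ell]$ of visible intervals together with a topological drawing $\mathcal{D}$ and verify the storyplan conditions frame by frame. The goal is to order the appearances so that every induced frame graph $G[t]$ is planar, introduce new vertices into a face of the current drawing, and exploit the full topological flexibility of curved edges to route freshly added edges through whichever face contains both endpoints. Because the number of frames is unconstrained, there is substantial freedom, and the construction can be made routine once the gadget structure of $G$ is fixed.

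For the negative direction, I would assume towards contradiction a planar geometric storyplan $(A,\mathcal{D})$ and leverage the rigidity of straight-line drawings of highly connected subgraphs. The workhorse is the fact that a straight-line drawing of $K_4$ has essentially one combinatorial embedding up to the choice of outer face: one of the four vertices lies in the interior of the triangle formed by the others. Chaining such $K_4$-gadgets, I would argue that in any geometric realization the persistent positions of already-placed vertices split the plane into a fixed pattern of triangular regions. Then I would identify a vertex whose neighborhood in $G$ forces it, in every frame where it is visible, to sit inside one particular triangular region while simultaneously being adjacent (in a later frame, say) to a vertex that any straight-line drawing must place outside of that region, producing an unavoidable crossing.

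The main obstacle I anticipate is the universal quantification over visible intervals: the storyplan problem lets the designer of $\mathcal{D}$ pick the order in which vertices enter and leave, so we cannot single out one critical frame in advance. The argument must instead show that for \emph{every} compatible choice of $A$, some frame fails to admit a planar straight-line drawing extending the already-fixed positions. This will require a case analysis on which key vertices of the rigidity gadgets co-occur simultaneously, combined with purely geometric no-go arguments of the "point outside a triangle cannot see a point inside" flavor. Making the gadget small enough to fit into 28 vertices while remaining robust to every ordering of appearances is the genuine engineering difficulty of the proof.
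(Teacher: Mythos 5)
Your proposal is a strategy outline rather than a proof: it never produces the witness graph, and the two hardest components are exactly the ones left open. The paper's construction is concrete and quite different from what you sketch. It uses three disjoint four-cycles $A,B,C$, eight \emph{apex} vertices each adjacent to all twelve cycle vertices, and eight \emph{edge} vertices each adjacent to one apex and to the endpoints of one edge in each cycle. The mechanism that defeats the adversarial choice of visible intervals --- the difficulty you correctly identify but do not resolve --- is not rigidity of $K_4$-gadgets but high degree plus counting: because every apex must co-occur with every cycle vertex, non-planarity of $K_{3,8}$ and $K_{4,8}$ forces a frame in which all three cycles are simultaneously visible, and an Euler-formula edge count shows that no two apices can co-occur with all cycles at once, so at least six of the eight apices each get their \emph{own} frame containing all three complete four-cycles. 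Your ``late-appearing routing vertices'' and ``chained $K_4$'s'' provide no analogous forcing argument, and without one the storyplan designer can simply spread the gadget vertices over many frames so that no critical configuration ever appears.

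The second missing ingredient is the straight-line-specific obstruction. A $K_4$ has an essentially unique embedding topologically as well as geometrically, so ``one vertex inside the triangle of the others'' cannot by itself separate the topological from the geometric setting --- any contradiction you derive that way would also rule out a topological storyplan, contradicting part (a). The paper's obstruction is genuinely metric: for a fixed straight-line drawing of a four-cycle, at most two of its four edges can appear on the outer face of a planar straight-line drawing of the cycle plus an external apex (opposite edges of a convex quadrilateral cannot both be ``good,'' and a reflex quadrilateral disqualifies both edges at the reflex corner). Combined with an ordering $\prec$ on the two non-nested cycles that fixes which cycle the apex ``hugs,'' pigeonhole over the six forced frames against the at most four usable apex-pairs yields the contradiction. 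Without an obstruction of this visibility/convexity type, and without the co-occurrence forcing, your plan cannot be completed as stated.
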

The graph $G$ is defined as follows. 
It consists of three four-cycles $A$, $B$, and $C$ with vertices $a_i,b_i,c_i$, $i\in [4]$ and edges $e_i^A = a_ia_{i\bmod{4}+1}$, $e_i^B= b_ib_{i\bmod{4}+1}$, and $e_i^C = c_ic_{i\bmod{4}+1}$ for~$i \in [4]$.
Furthermore, the graph contains eight \emph{apex vertices} $q_i^j$, $i\in [4], j\in [2]$, such that each $q_i^j$ is connected to all vertices of $A$, $B$, and $C$.
Lastly, eight more \emph{edge vertices} $r_i^j$, $i\in [4], j\in [2]$ are added. Each $r_i^j$ is connected to $q_i^j$ and to all vertices from $e_i^A,e_i^B$, and $e_i^C$. 

We define $Q=\{q_i^j\mid i\in [4],j\in [2]\}$ and $R=\{r_i^j\mid i\in [4], j\in [2]\}$. For each $i\in [4]$ and $j\in [2]$, $P_i^j=\{q_i^j,r_i^j\}$ is an \emph{apex-pair}.

Below, we give a formal definition of $G$. Further, the graph can also be parsed from one of its planar storyplans in \cref{fig:storyplanforg}.
\begin{align*}
    V(G)=&\{a_i,b_i,c_i\mid i\in [4]\}\cup \{q_i^j,r_i^j\mid i\in [4],j\in [2]\}\\
    E(G)=&\{a_ia_{i\bmod{4}+1},b_ib_{i\bmod{4}+1},c_ic_{i\bmod{4}+1}\mid i\in [4]\} \cup \{r_i^jq_i^j\mid i\in [4],j\in [2]\} \cup \\
    &\{r_i^ja_i,r_i^ja_{i\bmod{4}+1},r_ib_i,r_ib_{i\bmod{4}+1},r_i^jc_i,r_i^jc_{i\bmod{4}+1}\mid i\in [4],j\in [2]\}\cup \\
    &\{q_i^ja_k,q_i^jb_k,q_i^jc_k\mid i,k\in [4], j\in [2]\}
\end{align*}
We prove both statements of \cref{thm:mainthm} in the following two sections.

\subsection{\texorpdfstring{$\boldsymbol{G}$}{} Has a Planar Storyplan}
\begin{figure}[htb!]
    \centering
    \includegraphics[page=1]{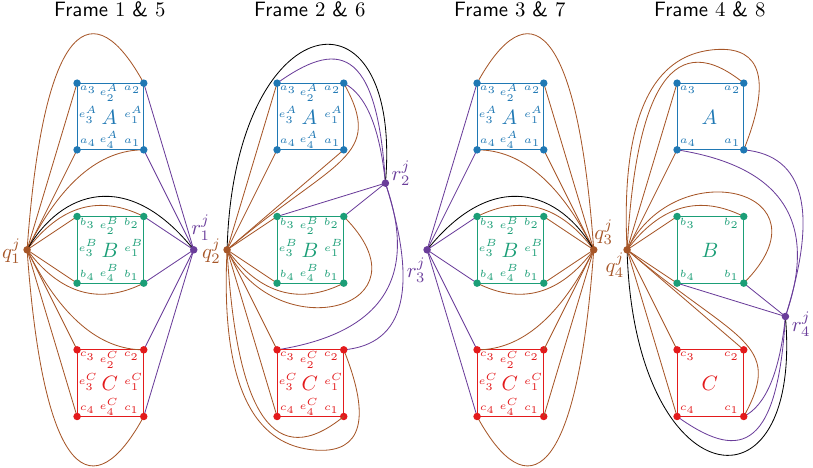}
    \caption{A planar storyplan for the graph $G$ with eight frames. 
    Frames $k$ and $k+4$ show the same drawing for each $k\in [4]$. For frames $1,2,3,4$, $j=1$; for frames $5,6,7,8$, $j=2$.}
    \label{fig:storyplanforg}
\end{figure}
A planar storyplan of $G$ with eight frames $\Gamma_1, \dots, \Gamma_8$ is given in \cref{fig:storyplanforg}. The four-cycles~$A$, $B$, $C$ are visible in all frames. Each apex-pair $q_i^j,r_i^j$ for $i \in [4]$ and $j\in [2]$ is visible only in frame $\Gamma_{(j-1)4+i}$. As the neighborhoods of apex pairs $q_i^1,r_i^1$ and $q_i^2,r_i^2$, $i\in [4]$, w.r.t.\ the four-cycles are the same, they can be represented by the same drawing. This is why we show only four drawings in the figure.

\subsection{\texorpdfstring{$\boldsymbol{G}$}{} Does Not Have a Planar Geometric Storyplan}
This section serves the purpose of showing the following.
\begin{theorem}\label{thm:doesnotadmit}
    The graph $G$ does not admit a  planar geometric storyplan.
\end{theorem}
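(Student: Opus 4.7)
The plan is a proof by contradiction, based on a case analysis of the possible planar straight-line drawings of $A \cup B \cup C$. Suppose $G$ admits a planar geometric storyplan. I first argue that the drawing of $A \cup B \cup C$ is fixed: since each apex vertex $q_i^j$ is adjacent to every cycle vertex and appears in at least one frame, all cycle vertices must co-occur with every apex vertex, and the visibility interval of each cycle vertex spans the intervals of all eight apex vertices. Hence the drawing of $A \cup B \cup C$ is a single planar straight-line drawing of three vertex-disjoint 4-cycles. Call a point $p$ in the plane a \emph{valid apex position} (with respect to this drawing) if the 12 straight-line segments from $p$ to the cycle vertices all avoid every cycle edge. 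Every apex vertex must be drawn at some valid apex position. Since the vertex-to-frame assignment only imposes additional constraints (if two apex pairs co-occur in a frame, their spokes must be mutually non-crossing as well), it suffices to show that no planar straight-line drawing of $A \cup B \cup C$ admits eight distinct valid apex positions consistent with placements of the eight edge vertices.

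The first key step is a visibility lemma for convex cycles: if $X$ is a convex 4-cycle drawn with straight lines, a point $p$ in the exterior of $X$ can see at most three of its four vertices without crossing a cycle edge, since the fourth vertex lies behind the chord determined by the two ``tangent'' vertices from $p$. Consequently, every valid apex position must lie in the interior of every convex cycle. By the Jordan curve theorem, two vertex-disjoint simple polygons in the plane have interiors that are either disjoint or nested. Therefore, if two of $A, B, C$ are both drawn convex, then either their interiors are disjoint (no common inside point) or one is nested in the other, say $B$ inside $A$, in which case any straight-line segment from inside $B$ to a vertex of $A$ crosses $B$'s boundary. Either case precludes a valid apex position. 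Hence, at least two of $A, B, C$ must be drawn non-convex.

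For the remaining non-convex configurations, I would analyze the set of valid positions with respect to a non-convex 4-cycle (a restricted region related to its reflex vertex) and intersect these regions across the two or three non-convex cycles. I would then perform a case analysis on the number and placement of the reflex vertices and on the relative arrangement of the three cycles (nested or side-by-side). For each case, the additional constraint that the edge vertex $r_i^j$ must be placed in the intersection of three ``edge triangles'' (one per cycle, determined by the apex and the edges $e_i^A, e_i^B, e_i^C$) further localizes admissible placements. The goal is to show that in each case, the number of admissible combinatorial placements for apex pairs is strictly less than eight, yielding the contradiction.

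\textbf{Main obstacle.} The hard part will be the case analysis for the non-convex configurations, especially when all three cycles are drawn non-convex with various relative positions of their reflex vertices. Carefully combining the apex and edge-vertex visibility constraints to rule out every such configuration requires delicate geometric reasoning and a thorough enumeration of subcases.
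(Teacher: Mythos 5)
Your central geometric claim --- the ``visibility lemma for convex cycles,'' asserting that an exterior point of a convex quadrilateral can see at most three of its four vertices --- is false, and the rest of the argument collapses with it. Counterexample: take the convex trapezoid $a_1=(0,0)$, $a_2=(1,0)$, $a_3=(2,1)$, $a_4=(-1,1)$ and the exterior point $p=(0.5,-2)$; all four segments $pa_k$ avoid all four edges. The point is that the two tangent vertices seen from $p$ may be \emph{adjacent}, in which case the entire near chain (all four vertices) is visible and only one edge is hidden. Consequently your deductions that every valid apex position lies inside every convex cycle, and hence that at least two of $A,B,C$ must be drawn non-convex, do not hold. More structurally, any argument that tries to rule out valid apex positions outright is doomed: the subgraph consisting of the three 4-cycles plus a single apex pair is planar and admits a straight-line drawing (indeed the whole graph minus most apex pairs does), so the obstruction cannot be the nonexistence of valid positions --- it must be that there are not enough combinatorially distinct ones for all the apex pairs that are forced to coexist with the three cycles. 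You gesture at this with the edge vertices $r_i^j$, but that part of your plan is entirely unexecuted.

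This counting argument is exactly what the paper's proof supplies, and it is the part your proposal is missing. The paper first shows (via $K_{3,8}$/$K_{4,8}$ and an Euler-formula count) that at least \emph{six} of the eight apex pairs must each appear in a frame together with all three cycles --- note that your claim that each cycle vertex's interval ``spans'' all eight apex intervals does not follow from pairwise co-occurrence and is another gap. It then fixes an orientation via the relation $\prec$ so that, w.l.o.g., in every such frame $\mathcal{D}(B)$ lies in the outer face of $\mathcal{D}(A+q_i^j)$; planarity then forces the edge vertex $r_i^j$ onto the outer face, i.e., the edge $e_i^A$ it attaches to must be ``good'' (able to bound the outer face of $\mathcal{D}(A+q_i^j)$). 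The key lemma is that a fixed straight-line quadrilateral has at most two good edges (for the convex case, $e_1^A$ and $e_3^A$ cannot both be good because the regions $H_1^I\cap H_2^O\cap H_3^O\cap H_4^O$ and $H_1^O\cap H_2^O\cap H_3^I\cap H_4^O$ cannot both be nonempty). Since each edge $e_i^A$ serves only the two apex pairs $P_i^1,P_i^2$, at most four apex pairs can be accommodated, contradicting the six that are required. You would need to discover and prove both the ``at most two good edges'' lemma and the ``at least six coexisting frames'' lemma to complete your plan, and your convex-case analysis must be redone from scratch.
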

We proceed to assume that $G$ admits a planar geometric storyplan $(A,\mathcal{D})$, and we will arrive at a contradiction.
We require a sequence of lemmas that will lead to this contradiction.
\begin{lemma}\label{lemma:apexorder}
    Let $I=\{A(q_i^j)\mid q_i^j\in Q\}$ be the visible intervals of the vertices in $Q$. Then there exists a set $I'\subset I$ of cardinality at least 6, such that for all $[s,e]\in I'$ and all $t\in [s,e]$, $A$, $B$, and $C$ are visible at $t$. 
\end{lemma}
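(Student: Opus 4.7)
Let $v^*$ and $v^{**}$ be vertices of $A\cup B\cup C$ attaining $s_{v^*} := \max_v s_v$ and $e_{v^{**}} := \min_v e_v$ respectively (writing $A(v) = [s_v, e_v]$), and set $T := [s_{v^*}, e_{v^{**}}]$. The plan is to show that every vertex of $A\cup B\cup C$ is visible throughout $T$, and that at most two of the eight apex intervals $A(q)$ extend outside $T$; the six-or-more remaining apex intervals then form the desired set $I'$.

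The first step is to show that at time $s_{v^*}$ every vertex of $A\cup B\cup C$ is visible. Each apex is adjacent to $v^*$, so every apex $q$ co-occurs with $v^*$, giving $e_q \ge s_{v^*}$. If all eight apexes further satisfied $s_q < s_{v^*}$, they would all be visible at $s_{v^*}$; together with $v^*$ and the two neighbors of $v^*$ in its four-cycle (which are visible at $s_{v^*}$ because they satisfy $s_v \le s_{v^*}$ and co-occur with $v^*$), the frame at $s_{v^*}$ would contain a $K_{8,3}$ subgraph, contradicting planarity. Hence some apex $q_0$ has $s_{q_0} \ge s_{v^*}$, and then co-occurrence yields $e_v \ge s_{q_0} \ge s_{v^*}$ for every $v \in A\cup B\cup C$; combined with $s_v \le s_{v^*}$, every such $v$ is visible at $s_{v^*}$.

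The technical heart of the argument is the claim that the graph $H$ consisting of two non-adjacent apex vertices, each joined to all twelve vertices of three disjoint four-cycles, is non-planar. Since $|V(H)| = 14$ and $|E(H)| = 24 + 12 = 36 = 3\cdot 14 - 6$, any planar embedding of $H$ would have to be a triangulation. However, in any triangulation the cyclic rotation of the neighbors around a vertex must itself form a cycle in the graph, and applied to a degree-$12$ apex this forces a Hamiltonian cycle on the twelve cycle-vertices built exclusively from the twelve four-cycle edges --- impossible, since these edges partition into three disjoint four-vertex components.

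Apply this to the frame at $s_{v^*}$: by the second paragraph all twelve vertices of $A\cup B\cup C$ and hence all twelve four-cycle edges are present, so if two or more apexes were visible at $s_{v^*}$ the frame would contain $H$ as a subgraph, a contradiction. Hence at most one apex is visible at $s_{v^*}$, and since an apex is visible at $s_{v^*}$ exactly when $s_q \le s_{v^*}$, at most one apex satisfies $s_q < s_{v^*}$. A symmetric argument at time $e_{v^{**}}$ shows that at most one apex satisfies $e_q > e_{v^{**}}$. An apex interval fails to be contained in $T$ precisely when one of these two inequalities holds, so at most two apex intervals are bad, and the lemma follows. The main obstacle is the non-planarity of $H$: a plain edge count via Euler's formula is consistent with planarity at $|E| = 3|V|-6$, and the triangulation-Hamiltonicity argument above, which exploits the fact that the four-cycle edges split into three components rather than a single spanning cycle, is essential to rule it out.
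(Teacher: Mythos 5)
Your proof is correct and follows the same overall strategy as the paper: establish a common time window in which all twelve cycle vertices are visible, show that two apex vertices cannot coexist with all three four-cycles in a planar frame, and conclude that at most two of the eight apex intervals can escape the window. The one place where you genuinely diverge is the key non-planarity claim for the $14$-vertex, $36$-edge graph $H$ (two non-adjacent apexes joined to three disjoint four-cycles). The paper handles this by asserting that the drawing ``must clearly have a face incident to four vertices'' of one of the four-cycles and then adding a chord to exceed the bound $3n-6$; since $36 = 3\cdot 14 - 6$ would force every face to be a triangle, that assertion is exactly the point in need of justification, and the paper leaves it implicit. Your argument --- that equality in Euler's bound forces a triangulation, whence the rotation around a degree-$12$ apex would yield a Hamiltonian cycle on its neighborhood, impossible because the neighborhood induces three disjoint four-cycles --- supplies a clean, self-contained proof of that step and is, if anything, tighter than the paper's. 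Your bookkeeping is also organized slightly differently (counting apexes visible at the two endpoints $s_{v^*}$ and $e_{v^{**}}$ of the window rather than sorting the intervals by start and end points), but this is the same idea in different clothing. No gaps.
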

\begin{proof}
    First, there is a time step in which all cycle vertices from a single cycle, say $A$, are visible. Otherwise, consider the time step $e_A$ after which the first cycle vertex of $A$, say $a_1$, disappears, yet $a_3$ has not yet appeared. Then, in $e_A$, the three cycle vertices $a_1, a_2, a_4$ of $A$ and all eight apex vertices must be visible since they need to see both $a_1$ and $a_3$; they induce a non-planar $K_{3,8}$, contradicting that every frame is planar.
    This holds analogously for the cycles $B,C$.
    Secondly, there is a time step in which all cycle vertices of all three cycles are visible. Otherwise, consider the time step after which the first cycle vertex disappears. Again, at this time step, we have at least a non-planar $K_{4,8}$ by the same argumentation that the apex vertices need to see both the disappearing cycle vertex and at least one cycle vertex that has not yet appeared. 
    
    Now sort the intervals in $I$ by their start points.
    Consider the first two intervals $[s,e],[s',e']$ in this order. Let $q$ and $q'$ be their respective apex-vertices.
    We show that $[s,e]\cap [s',e']=\emptyset$. By our previous arguments, there is a time step $t\in [s,e]$ in which $q$ and all cycle vertices are visible.  If $s\le s'\le e$, then there is furthermore a time step $t'$ with $s'\le t'\le e$, such that~$q$, $q'$, and all cycle vertices are visible. Notice that this amounts to $14$ vertices and $36$ edges. Further, the induced drawing of the four-cycles and the two apex-vertices must clearly have a face incident to four vertices -- four vertices of one of the four-cycles. Adding a chord, keeping the structure planar, we would end up with $14$ vertices and $37$ edges, a contradiction to Euler's formula.
    Symmetrically, by considering the last two intervals when sorted by their end points, we also see that their intersection is empty. The statement follows directly, as each apex must co-occur with each cycle-vertex.
\end{proof}

\begin{corollary}\label{corr:apexpairspluscycles}
    There exist at least 6 frames, each containing a different apex pair and all four-cycles.
\end{corollary}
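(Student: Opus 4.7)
The plan is to lift Lemma~\ref{lemma:apexorder} from apex vertices to apex pairs, and then rule out that two apex pairs end up in the same frame via a direct planarity count.

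First I would apply Lemma~\ref{lemma:apexorder} to obtain a set $I'$ of at least six apex intervals lying entirely in the window during which $A$, $B$, and $C$ are simultaneously visible. Let $Q' \subseteq Q$ be the six apex vertices whose intervals lie in $I'$. For each $q_i^j \in Q'$, its partner $r_i^j$ in the apex pair $P_i^j$ is adjacent to $q_i^j$ in $G$, so the co-occurrence requirement produces a time step $t_i^j \in A(q_i^j) \cap A(r_i^j)$. By the defining property of $I'$, all twelve cycle vertices are visible at $t_i^j$, so $\Gamma_{t_i^j}$ is a frame showing the entire apex pair $P_i^j$ together with all three four-cycles. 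This already yields, for each of the six selected apex vertices, a candidate frame of the required form.

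The main obstacle, and where I expect the real work to lie, is showing that the six frames $\Gamma_{t_i^j}$ are pairwise distinct. The natural approach is an Euler-formula contradiction: suppose instead that a single planar frame $\Gamma_t$ hosts two different apex pairs $P_i^j, P_{i'}^{j'}$ along with all four-cycles. The induced subgraph then has sixteen vertices and, combining the three four-cycles (twelve edges), the two $q$-vertices each adjacent to all twelve cycle vertices ($2\cdot 12$ edges), the two $r$-vertices each adjacent to six cycle vertices ($2\cdot 6$ edges), and the two $q_i^j r_i^j$ edges, we obtain $50$ edges, which exceeds the planarity bound $3\cdot 16 - 6 = 42$. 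Hence no planar frame hosts two apex pairs together with all cycles, forcing the six frames obtained in the previous step to be pairwise distinct, as claimed by the corollary.
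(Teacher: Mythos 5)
Your proposal is correct and follows the argument the paper leaves implicit: Lemma~\ref{lemma:apexorder} places six apex intervals inside the window where all cycles are visible, the edge $q_i^jr_i^j$ forces a co-occurrence time within that interval, and planarity rules out two apex pairs sharing a frame with all cycles. Your explicit Euler count ($16$ vertices, $50>3\cdot 16-6$ edges) for the distinctness of the six frames is a clean way to settle a point the paper does not spell out (the lemma's statement alone does not assert the six intervals are pairwise disjoint), and it is consistent with the $14$-vertex count used inside the proof of Lemma~\ref{lemma:apexorder}.
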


\begin{lemma}
    There exists a pair from $\{A, B, C\}$ that is not nested, i.e., in the drawing of only these two  four-cycles, the outer face is incident to all vertices.
\end{lemma}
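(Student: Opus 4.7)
The plan is to argue by contradiction: suppose every pair from $\{A,B,C\}$ is nested, and derive that no position in the plane can host the apex vertex $q$ of a suitable frame. By \cref{corr:apexpairspluscycles}, I may fix a frame $\Gamma_t$ showing all three four-cycles together with some apex pair $P_i^j=\{q,r\}$. Since $\Gamma_t$ is a straight-line planar drawing, each four-cycle is rendered as a simple quadrilateral, and no two of these quadrilaterals share boundary points (otherwise two cycle edges would cross). Thus, for any two of the cycles, exactly one of two situations occurs: one quadrilateral lies in the interior of the other (they are nested), or their closed interiors are disjoint and both cycles are incident to the outer face of the two-cycle sub-drawing (they are not nested, as in the lemma).

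Assuming all three pairs are nested, I would first show that $\{A,B,C\}$ is totally ordered by containment. Indeed, if $B$ and $C$ both contained $A$, they cannot have disjoint interiors (their interiors share the interior of $A$); by the pairwise-nested hypothesis, one of $B,C$ must then contain the other, and transitivity of containment on pairwise-disjoint simple regions yields the total order. So, up to relabeling, I may assume $\mathrm{int}(A)\subset\mathrm{int}(B)\subset\mathrm{int}(C)$, partitioning the plane into four regions: $\mathrm{ext}(C)$, the ``annulus'' $\mathrm{int}(C)\setminus\overline{\mathrm{int}(B)}$, the annulus $\mathrm{int}(B)\setminus\overline{\mathrm{int}(A)}$, and $\mathrm{int}(A)$.

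The final step is a short geometric case check on the location of $q$. Because $q$ is adjacent to all $12$ cycle vertices via straight-line segments, any such segment connecting points in two different regions of the above partition must properly cross at least one of the enclosing quadrilaterals. Thus $q$ cannot lie in $\mathrm{ext}(C)$ (segments to $A$- or $B$-vertices cross $C$), nor in the outer annulus (segments to $A$-vertices cross $B$), nor in the inner annulus (segments to $C$-vertices cross $B$), nor in $\mathrm{int}(A)$ (segments to $B$- or $C$-vertices cross $A$). Every case contradicts planarity of $\Gamma_t$, so some pair must be non-nested. The main obstacle I anticipate is the total-ordering step: one must carefully justify why pairwise containment among three disjoint simple polygons forces a linear order rather than, say, two polygons each containing $A$ but side by side with each other — a possibility ruled out precisely because their interiors would then share the interior of $A$.
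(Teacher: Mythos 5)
Your proof is correct and takes essentially the same route as the paper's: both use the frame guaranteed by \cref{corr:apexpairspluscycles} (an apex vertex together with all three four-cycles) and observe that if every pair were nested, no face of the drawing of $A\cup B\cup C$ would be incident to all twelve cycle vertices, so the apex could not be placed. Your version merely spells out the total ordering by containment and the region-by-region crossing argument that the paper's one-line proof leaves implicit.
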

\begin{proof}
    This follows directly from \cref{lemma:apexorder}, as there exists a frame with some apex vertex and all three four-cycles. Clearly, the statement holds, as there is otherwise no face (in the induced drawing of $A$, $B$, and $C$) that contains all four-cycle vertices.
\end{proof}
In the remainder of the proof, we assume that $A$ and $B$ form a non-nested pair.

Next, we want to define one of the drawings of $A$ or $B$ to be ``smaller'', see \cref{fig:defsmaller}. This will be useful in showing that for one of the four-cycles $A$ and $B$, in the drawing of this four-cycle plus an apex-vertex, the other four-cycle will always lie on the outer face.
\begin{figure}
    \centering
    \includegraphics[page=2]{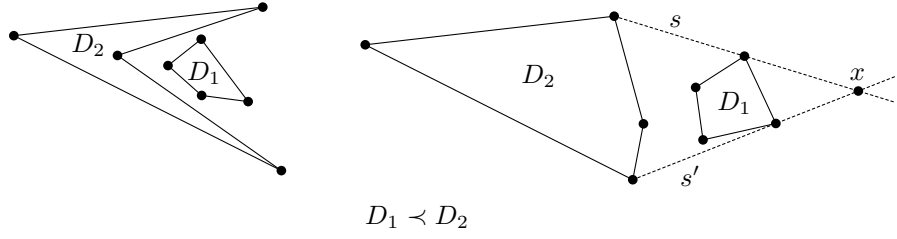}
    \caption{The two cases of a $D_1\prec D_2$. In the second case, the point $x$ is closer to $D_1$ than to $D_2$.}
    \label{fig:defsmaller}
\end{figure}
\begin{definition}\label{def:smaller}
    Consider two simple quadrilaterals $D_1$ and $D_2$, such that the outer face is incident to all eight vertices.
    Define $D_1\prec D_2$ if one of the following conditions holds.
    \begin{enumerate}
        \item $D_1$ is contained within the convex hull of $D_2$.
        \item The convex hull of $D_1\cup D_2$ contains two non-parallel line segments $s$ and $s'$ each connecting a point from $D_1$ with a point from $D_2$. Let $x$ be the intersection point of the supporting lines of $s$ and $s'$. Point $x$ is closer to $D_1$ than to $D_2$ (in terms of the classic definition between a point and the set of points defined by the inside region of the quadrilateral).
    \end{enumerate}
\end{definition}
It is clear that we cannot have $D_1\prec D_2$ and $D_2\prec D_1$.
It can, however, be that we have neither. %
As the drawings of $A$, $B$, and $C$ correspond to simple quadrilaterals, we can show the following lemma (refer to \cref{fig:implication} for an illustration). Note that the statement makes sense, as $A$ and $B$ share the outer face.
\begin{figure}
    \centering
    \includegraphics[page=3]{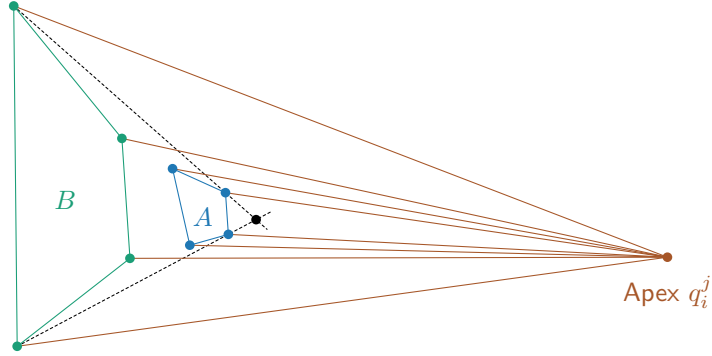}
    \caption{$\mathcal{D}(A)\prec \mathcal{D}(B)$ is implied by $A$ lying inside of one of the faces of the drawing of $A+q_i^j$.}
    \label{fig:implication}
\end{figure}
\begin{lemma}
    Let $F$ be a frame containing an apex pair $P_i^j$ and all four-cycles. 
    We have that $\mathcal{D}(A)\prec \mathcal{D}(B)$ if $\mathcal{D}(A)$ is contained within an inner face of $\mathcal{D}(B+ q_i^j)$.
    Equivalently, $\mathcal{D}(B)\prec \mathcal{D}(A)$ if $\mathcal{D}(B)$ is contained within an inner face of $\mathcal{D}(A+ q_i^j)$.
\end{lemma}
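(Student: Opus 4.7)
The plan is to argue that in $\mathcal{D}(B+q_i^j)$ the apex $q_i^j$ must lie inside the simple polygon bounded by $\mathcal{D}(B)$; this forces every inner face of $\mathcal{D}(B+q_i^j)$ to be contained in that polygon, and condition~1 of \cref{def:smaller} then yields the result.

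The first and principal step, which I expect to be the main obstacle, is the geometric claim that in any planar straight-line drawing of the wheel graph $B+q_i^j$, the apex $q_i^j$ is drawn inside the polygon bounded by $\mathcal{D}(B)$. Since $\mathcal{D}(B)$ is a simple quadrilateral, it is either convex or a dart with a unique reflex vertex. Assuming for contradiction that $q_i^j$ lies outside, I would separately rule out both shapes via visibility arguments. In the convex case, any external point is straight-line visible to at most two consecutive vertices of $B$, so four non-crossing straight segments from $q_i^j$ to all four vertices cannot coexist. In the dart case with reflex vertex $b_k$, any external point lies either in the pocket of the convex hull (from where the vertex $b_{k+2}$ opposite $b_k$ is separated from $q_i^j$ by an edge incident to $b_k$), or outside the convex hull (from where $b_k$ itself is hidden behind one of its incident edges, since the straight segment from $q_i^j$ to $b_k$ must first cross the open side $b_{k-1}b_{k+1}$ of the convex hull into the pocket and then cross an incident edge of $b_k$). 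In either case $q_i^j$ cannot be connected to all four $b_k$ by straight segments without a crossing.

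Once $q_i^j$ is known to lie inside the polygon, the Jordan curve $\mathcal{D}(B)$ winds once around $q_i^j$, so the rotational order of $b_1, b_2, b_3, b_4$ about $q_i^j$ coincides (up to reversal) with their cyclic order in $B$. The four segments $q_i^j b_k$ thus remain inside the polygon and partition it into the four triangles $q_i^j b_k b_{k \bmod 4 + 1}$, which are precisely the inner faces of $\mathcal{D}(B+q_i^j)$.

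Now, if $\mathcal{D}(A)$ is contained in an inner face of $\mathcal{D}(B+q_i^j)$, then $\mathcal{D}(A)$ lies in one of these triangles, and hence in the polygon bounded by $\mathcal{D}(B)$, which is contained in $\mathrm{conv}(\mathcal{D}(B))$. By condition~1 of \cref{def:smaller}, $\mathcal{D}(A)\prec\mathcal{D}(B)$. The symmetric statement follows by interchanging the roles of $A$ and $B$ throughout.
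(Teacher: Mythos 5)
Your argument rests on the claim that in every planar straight-line drawing of the wheel $B+q_i^j$ the apex must lie inside the simple polygon bounded by $\mathcal{D}(B)$. This claim is false, and it is exactly the point where the whole difficulty of the lemma (and of the paper's construction) lives. Counterexample: take the convex trapezoid $b_1=(-10,-10)$, $b_2=(10,-10)$, $b_3=(2,0)$, $b_4=(-2,0)$ and the apex $q=(0,10)$. The spokes $qb_3$ and $qb_4$ reach the two near corners directly, and the spokes $qb_1$ and $qb_2$ pass entirely to the left and right of the trapezoid, crossing no edge; one checks easily that this is a planar straight-line drawing of $W_4$ whose outer face is the triangle $q b_1 b_2$ and whose hub lies strictly outside the quadrilateral. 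In particular, your convex-case visibility argument ("an external point sees at most two consecutive vertices") conflates visibility \emph{inside} the polygon with non-crossing segments in the plane: a segment from an external point to a far vertex may simply go around the outside of the quadrilateral.

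Because of this, the inner faces of $\mathcal{D}(B+q_i^j)$ need not all be contained in the polygon of $\mathcal{D}(B)$: in the drawing above they are the quadrilateral itself plus three triangles $qb_4b_1$, $qb_3b_4$, $qb_2b_3$, two of which lie (partly or wholly) outside the convex hull of $\mathcal{D}(B)$. If $\mathcal{D}(A)$ sits in such a triangle, condition~1 of \cref{def:smaller} does not apply, and one must establish condition~2 — which is precisely why that condition appears in the definition. Your proof never engages with it. The paper's argument for this case works inside the frame $F$: it looks at the outer face of the drawing induced by $q_i^j$, $A$, and $B$, identifies the two outer-face edges $qb$ and $qb'$, places the two convex-hull bridges $s,s'$ between $A$ and $B$ inside the triangle $b\,b'\,q_i^j$, and argues that if their supporting lines met closer to $B$ (or were parallel) then $b$ or $b'$ would not be straight-line visible from $q_i^j$. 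Some argument of this visibility type is unavoidable; the shortcut through "the apex is always interior" does not exist.
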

\begin{proof}
    Assume that $\mathcal{D}(A)$ is contained within an inner face of $\mathcal{D}(B+ q_i^j)$.
    For the frame~$F$, consider the drawing induced by $q_i^j$, $A$, and $B$.
    Consider the edges on the outer face of this drawing. Two of these connect a vertex of $B$ with $q_i^j$; we denote these vertices as~$b$ and~$b'$.
    Now, either $A$ is contained within the convex hull of $B$, or the convex hull of $A+B$ contains two line segments $s$ and $s'$, each connecting a vertex of $A$ with a vertex of $B$. In the former case, $\mathcal{D}(A)\prec \mathcal{D}(B)$ is immediate. 
    In the latter case, $s$ and $s'$ are contained within the triangle defined by $b$, $b'$, and $q_i^j$. Now notice that, if the intersection point of the lines defined by $s$ and $s'$ was closer to $\mathcal{D}(B)$ than to $\mathcal{D}(A)$, or if $s$ and $s'$ were parallel, the points on $s$ and~$s'$ which are vertices of $B$ would not be visible from $q_i^j$. Hence, we have $\mathcal{D}(A)\prec \mathcal{D}(B)$.
\end{proof}
As we cannot have both $\mathcal{D}(A)\prec \mathcal{D}(B)$ and $\mathcal{D}(B)\prec \mathcal{D}(A)$, we assume, w.l.o.g., that $\mathcal{D}(B)\not\prec \mathcal{D}(A)$. Thus, in any frame containing $A$, $B$, and some apex vertex $q_i^j$, $B$ will lie on the outer face of the drawing defined by $A$ and $q_i^j$.

We will now show the last ingredient, which can be seen as the key lemma of the construction. Essentially, we want to show that at most two edges of $A$ are ``good'' -- in the sense that they can lie on the outer face of the drawing of $A$ plus some apex vertex (see \cref{fig:halfplanes}b). 
More formally, for a fixed drawing of $A$, the edge~$e_k^A$, $k\in [4]$, is \emph{good} if it can be the edge on the outer face of the planar straight-line drawing $\mathcal{D}(A+q_i^j)$ for some possible placement of some apex vertex $q_i^j$ in the outer face of $\mathcal{D}(A)$.
We show the following.
\begin{figure}
    \centering
    \includegraphics[page=4]{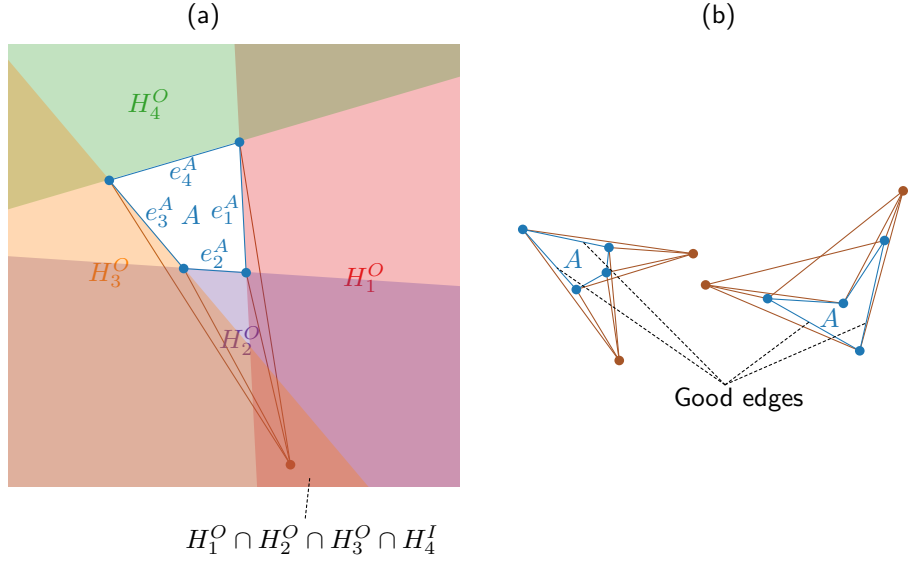}
    \caption{(a) The half-planes defined by $\mathcal{D}(A)$, an apex vertex must lie in a specific intersection of half-planes. (b) A convex and concave drawing of $A$ with exactly two good edges.}
    \label{fig:halfplanes}
\end{figure}
\begin{lemma}\label{lemma:twogoodsides}
    For a fixed drawing $\mathcal{D}(A)$ of $A$, $A$ has at most two good edges.
\end{lemma}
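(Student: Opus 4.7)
The plan is to characterize a good edge via a geometric containment condition on the apex and then show this condition can be met for at most two edges. Since $q$ must lie in the outer face of $\mathcal{D}(A)$, the drawing $\mathcal{D}(A+q)$ of the wheel $W_4$ cannot have the cycle $A$ as its outer face, so the outer face must be one of the four triangles $q a_\ell a_{\ell+1}$. Hence $e_k^A$ is good exactly when there is a placement of $q$ such that $\mathcal{D}(A+q)$ is planar and the triangle $q a_k a_{k+1}$ contains both $a_{k+2}$ and $a_{k+3}$ in its interior. Since this triangle lies in the open half-plane bounded by the line $L_k$ through $a_k$ and $a_{k+1}$ and containing $q$, a necessary condition is that both $a_{k+2}$ and $a_{k+3}$ lie on the same side of $L_k$.

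I would then split on the shape of $\mathcal{D}(A)$. If $\mathcal{D}(A)$ has a reflex vertex $a_j$, then the convex hull of the four vertices of $A$ is a triangle $T$ with vertex set $\{a_1,a_2,a_3,a_4\}\setminus\{a_j\}$, and $a_j$ lies strictly in the interior of $T$. For each of the two edges of $A$ incident to $a_j$, its supporting line passes through the interior point $a_j$ and a hull vertex of $T$; extending it through $a_j$ in the opposite direction it must exit $T$ through the interior of the opposite hull side, thereby strictly separating the two non-incident hull vertices. Hence the same-side condition fails for both edges incident to $a_j$, and at most the two remaining edges can be good.

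If $\mathcal{D}(A)$ is convex, all four edges satisfy the same-side condition trivially, so a finer argument is needed, and I would proceed with a pairwise exchange argument on opposite edges. If $e_k^A$ is good via apex $q_k$, then the triangle $q_k a_k a_{k+1}$ contains both $a_{k+2}$ and $a_{k+3}$, which forces $q_k$ to lie on the far side of the opposite line $L_{k+2}$, and the cross-section of this triangle by $L_{k+2}$ is a strictly shorter segment than its base $a_k a_{k+1}$; consequently $a_{k+2}$ and $a_{k+3}$ are strictly ``captured'' between the rays $q_k a_k$ and $q_k a_{k+1}$. A symmetric requirement for $e_{k+2}^A$ good would force the reverse strict containment, which is incompatible, so at most one edge of any opposite pair is good. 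Applying this to both opposite pairs $\{e_1^A,e_3^A\}$ and $\{e_2^A,e_4^A\}$ yields at most two good edges in total. The main obstacle is making this incompatibility precise when the lines $L_k$ and $L_{k+2}$ are not parallel, where the notion of ``strict capture'' must be formalized via a careful analysis of the angular wedges at the apexes (or, equivalently, via a projective treatment so that parallel and concurrent opposite edges are handled uniformly).
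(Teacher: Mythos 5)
Your overall structure mirrors the paper's proof (split on convex versus reflex $\mathcal{D}(A)$; in the convex case exclude one edge from each opposite pair), and your reflex case is correct and clean: the supporting line of an edge incident to the reflex vertex passes through that interior point and therefore strictly separates the two remaining hull vertices, so the necessary same-side condition fails for both such edges.

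The convex case, however, has a genuine gap, and it sits exactly at the step you flag as an ``obstacle.'' The intermediate claim that the cross-section of $\triangle q_k a_k a_{k+1}$ by $L_{k+2}$ is strictly shorter than the base $a_k a_{k+1}$ is simply false when $L_k \not\parallel L_{k+2}$: take $a_1=(0,0)$, $a_2=(1,0)$, $q=(0,100)$ and a steep line such as $y=99x+0.5$; it separates $q$ from $a_1,a_2$, yet its chord in the triangle (from $(0,0.5)$ to $(0.5,50)$) has length about $49.5$, versus base length $1$. One can place $a_3,a_4$ on that chord so that $a_1a_2a_3a_4$ is convex and $e_1^A$ is good, so this is not a pathological configuration. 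With that claim gone, the assertion that the ``reverse strict containment'' required for $e_{k+2}^A$ is ``incompatible'' is never actually established -- it is exactly the content of the lemma for an opposite pair, and you explicitly defer its proof in the non-parallel case. The way to close the gap (and the paper's route) is not metric but positional: from $a_{k+2},a_{k+3}\in\operatorname{int}(\triangle q a_k a_{k+1})$ one deduces that $q$ lies in the open outer half-planes of \emph{both} side edges, i.e.\ $q\in H_{k+1}^O\cap H_{k+3}^O$, as well as in $H_k^I$; an apex witnessing $e_{k+2}^A$ good would lie in the same convex wedge $H_{k+1}^O\cap H_{k+3}^O$ but in $H_k^O$. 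Since for a convex quadrilateral the supporting line $L_k$ is disjoint from that wedge (beyond $a_k$ it stays in $\overline{H_{k+1}^I}$, beyond $a_{k+1}$ in $\overline{H_{k+3}^I}$), the wedge cannot meet both sides of $L_k$, giving the contradiction. Your projective instinct is pointing in the right direction -- the relevant object is the wedge formed by the two side-lines, not chord lengths -- but as written the argument does not go through.
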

\begin{proof}
    The proof distinguishes two cases, depending on whether $\mathcal{D}(A)$ is convex or not. We start with the convex case.
    We show that it is not possible that $e_1^A$ and $e_3^A$ are both good. By symmetry, this means that $e_2^A$ and $e_4^A$ cannot be both good, and the statement follows.
    Thus, assume that $e_1^A$ and $e_3^A$ are good. We observe that in the drawing of $A$, each edge defines two open half-planes -- the ``outer half-plane'' and the ``inner half-plane'' (see \cref{fig:halfplanes}a for the outer half-planes).
    More precisely, the outer half-plane defined by $e_i^A$ contains as subset the points which can be connected by straight lines to $a_i$ and $a_{i\pmod 4+1}$ without crossings, while the inner half plane does not.
    Thus, for $k\in [4]$, define $H_k^O$ as the outer half-plane defined by~$e_k^A$, and $H_k^I$ as the inner half-plane. As the half-planes are open, they do not contain the line defined by their corresponding edge.
    Since $e_1$ is good, $H_1^I\cap H_2^O\cap H_3^O\cap H_4^O$ is non-empty (an apex vertex must be able to lie in this region for the drawing to be straight-line planar and for the edge $e_1^A$ to be on the outer face).  Similarly, as $e_3$ is good $H_1^O\cap H_2^O\cap H_3^I\cap H_4^O$ is non-empty.
    Note that $H_2^O\cap H_4^O$ is disjoint from the inner region defined by $\mathcal{D}(A)$. For~$H_1^O$ and $H_1^I$ to both share points with the intersection of~$H_2^O$ and $H_4^O$,
    the line defined by the drawing of $e_1^A$ must pass through $H_2^O\cap H_4^O$. Because the drawing of $A$ is convex, and because  $H_2^O\cap H_4^O$ does not share points with the region defined by $\mathcal{D}(A)$, this is not possible. We obtain a contradiction.

    For the concave case, w.l.o.g., assume that the concave corner is at $a_2$, with its two incident edges $e_1^A$ and $e_2^A$. 
    We claim that none of $e_1^A$ and $e_2^A$ can be good.
    For $e_1^A$ to be good, we again have that $H_1^I\cap H_2^O\cap H_3^O\cap H_4^O$ is non-empty. This is not possible: Consider any point $x$ outside $\mathcal{D}(A)$ and in $H_1^I$. Notice that the line segment from $x$ to $a_2$ intersects an edge of $A$. Hence, $e_1^A$ cannot be good. The proof that $e_2$ cannot be good proceeds equivalently.
\end{proof}
\begin{proof}[Proof of \cref{thm:doesnotadmit}]
    Assume for a contradiction that $G$ admits a planar geometric storyplan. 
    By our assumption, the drawings of $A$ and $B$ are in each other's outer face. Further $\mathcal{D}(B)\prec \mathcal{D}(A)$ does not hold. Thus, in each of the six frames existing by \cref{corr:apexpairspluscycles}, $\mathcal{D}(B)$ is in the outer face of the drawing induced by $A$ and the apex vertex. For the frame to be planar, the edge vertex of the apex-pair must be on the outer face of the drawing induced by the apex and $A$. Hence, the  edge vertex must be connected with a good side of $A$. But, as there are only two good sides (\cref{lemma:twogoodsides}), this is only possible for at most four apex-pairs.  Hence, two of the edge vertices of the six apex pairs of \cref{corr:apexpairspluscycles} cannot be connected to a good side of $A$, a contradiction.
\end{proof}

\section{NP-hardness}
Furthermore, we can show that 
\cref{problem:geomplanar} 
remains \NP-hard.
The proof is an adaptation of the hardness proof in \cite{BinucciGLLMNS24} for \cref{problem:planar}. In fact, the reduction is exactly the same. However, we have to adapt one direction of the correctness proof.

The reduction is from  \probname{One-In-3SAT}. \probname{One-In-3SAT} is an \NP-hard variant of \probname{3SAT}, where one asks for a satisfying assignment such that exactly one literal is true in each clause. Thus, let $\phi$ be such a formula over $n$ variables $x_1,\dots,x_n$ and $m$ clauses $C_1,\dots,C_m$. 
The graph $G$ consists of three types of gadgets: variable gadgets, clause gadgets, and wire gadgets (see \cref{fig:gadgets}). 

\begin{figure}[ht]
    \centering
    \includegraphics{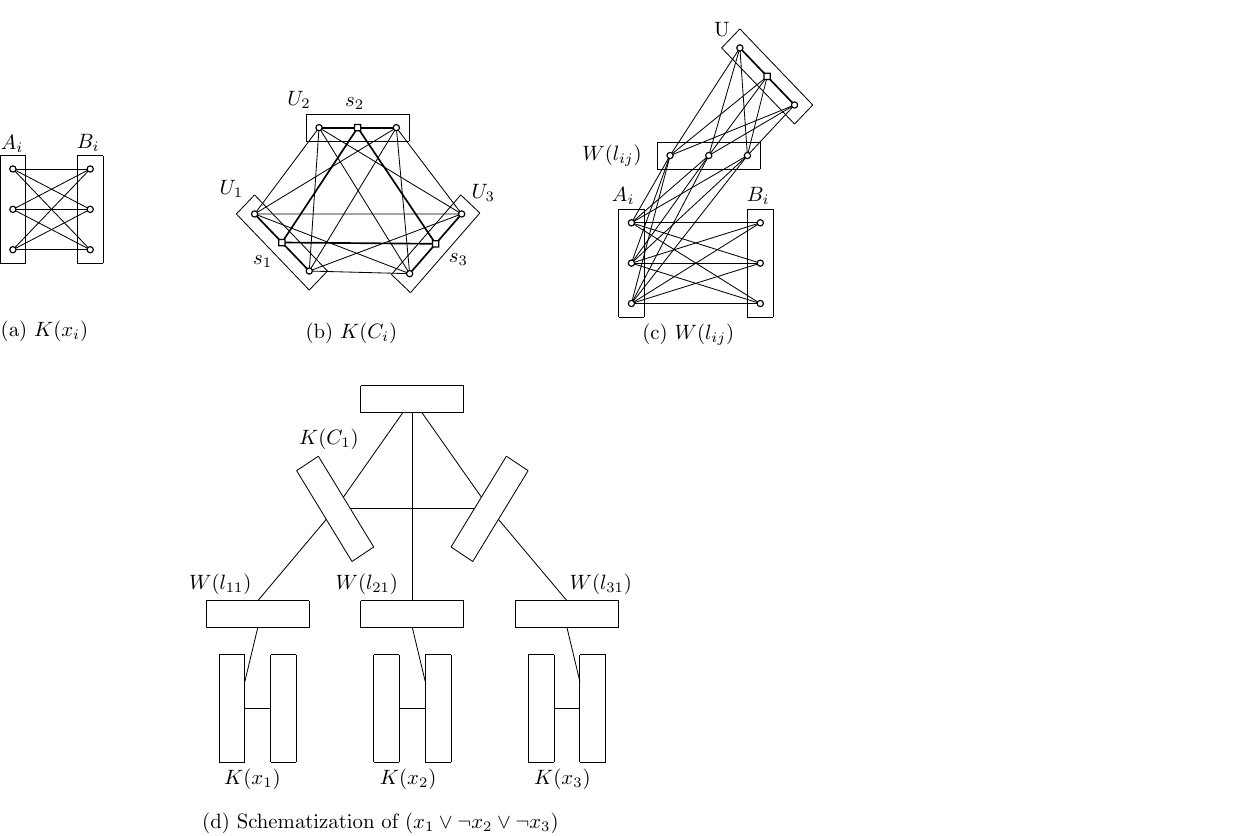}
    \caption{Illustration of the various types of gadgets and how they are connected.}
    \label{fig:gadgets}
\end{figure}

\subparagraph*{Variable gadget.} Each variable $x_i$ is represented in $G$ by a copy $K(x_i)$ of $K_{3,3}$ (see \cref{fig:gadgets}a). Let $A_i$ and $B_i$ be the two partite sets of $K(x_i)$ which we call the \emph{v-sides} of $K(x_i)$. 

\subparagraph*{Clause gadget.} Consider a copy of $K_{2,2,2}=(U_1\cup U_2\cup U_3,F)$. An extended $K_{2,2,2}$ is the graph obtained from any such copy by adding three vertices $s_1,s_2,s_3$, such that these three vertices are pairwise adjacent, and each $s_j$ is adjacent to both vertices in $U_j$, for $j\in [3]$. The vertices $s_1,s_2,s_3$ are called the \emph{special vertices}. Each clause $C_i$ is represented in $G$ by an extended $K_{2,2,2}$ denoted as $K(C_i)$ (see \cref{fig:gadgets}b).
We call each of the three sets of vertices $U_j\cup \{s_j\}$ a c-side of $K(C_i)$. For $j\in [3]$, we denote the two elements in $U_j$ by $u_{j1}$ and $u_{j2}$.

\subparagraph*{Wire gadget.} Refer to \cref{fig:gadgets}c. Let $x_i$ be a variable having a literal $\ell_{ij}$ in a clause~$C_j$, i.e.,~$x_i$ appears in $C_j$ positively or negatively. Any such variable-clause incidence is represented in $G$ by a set of three vertices, which we call the w-side $W(\ell_{i,j})$. All vertices of $W(\ell_{ij})$ are connected to all vertices of one of the three c-sides of $K(C_j)$, which we call $U$, such that the graph induced by $W({\ell_{ij}})\cup U$ in $G$ corresponds to a $K_{3,3}$. Also, each vertex of $W(\ell_{ij})$ is connected to all vertices of the v-side $A_i$ ($B_i$) if the literal $\ell_{ij}$ is positive (negative), again resulting in a $K_{3,3}$.

The resulting graph is the same as in the reduction in \cite{BinucciGLLMNS24}. There, it was shown that, if $G$ admits a planar storyplan, then $\phi$ has a satisfying assignment in which each clause contains exactly one true literal. As each geometric planar storyplan is a planar storyplan, we obtain the following.

\begin{corollary}[\hspace{-0.1pt}\cite{BinucciGLLMNS24}]\label{corr:oldirection}
    If $G$ admits a  planar geometric storyplan, then $\phi$ has a satisfying assignment in which each clause contains exactly one positive literal.
\end{corollary}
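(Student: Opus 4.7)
The plan for this corollary is essentially a composition of two simple facts and carries no substantial technical burden. First, I would observe that every planar geometric storyplan is in particular a planar storyplan: the only difference between the two notions is that the drawing function $\mathcal{D}$ of a geometric storyplan is required to map each edge to a straight-line segment, and such a drawing is trivially a valid simple topological drawing as well. Consequently, if the graph $G$ admits a planar geometric storyplan $(A,\mathcal{D})$, then by forgetting the straight-line requirement on $\mathcal{D}$ we obtain a planar (topological) storyplan of the very same $G$ with the same visibility function $A$.

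Second, since the graph $G$ constructed in this section (the union of the variable, clause, and wire gadgets described above) is identical to the graph used in the reduction of \cite{BinucciGLLMNS24}, I can invoke the forward direction of the correctness proof from that paper verbatim: whenever this $G$ admits a planar storyplan, the underlying formula $\phi$ has a satisfying assignment in which each clause contains exactly one true literal. Since the \probname{One-In-3SAT} variant used here demands exactly one \emph{positive} literal to be true per clause, and since that is precisely the condition the cited proof yields, chaining it with the first observation delivers the corollary.

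The main obstacle, such as it is, is purely bookkeeping: one must verify that the gadgets and their interconnections as described in this section match those in \cite{BinucciGLLMNS24} so that the cited implication truly applies without modification. The genuinely hard part of the overall \NP-hardness argument lies in the opposite direction -- producing an explicit planar \emph{geometric} storyplan from a one-in-three satisfying assignment of $\phi$ -- which is where the adaptation from the topological to the straight-line setting will demand real geometric effort; but that task is outside the scope of this particular corollary.
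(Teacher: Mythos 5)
Your proposal is correct and matches the paper's own argument exactly: the paper likewise notes that the constructed graph is identical to that of \cite{BinucciGLLMNS24}, that every planar geometric storyplan is in particular a planar storyplan, and then invokes the forward direction of the cited correctness proof. Nothing further is needed.
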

The other direction, however, does not follow from \cite{BinucciGLLMNS24}, as there, the storyplan constructed from a satisfying assignment contains non-straight edges.

Hence, we prove the following.
\begin{lemma}\label{lemma:newdirection}
    If $\phi$ has a satisfying assignment in which each clause contains exactly one positive literal, then $G$ admits a  planar geometric storyplan.
\end{lemma}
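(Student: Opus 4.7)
The plan is to reuse the vertex scheduling (visible-interval assignment $A$) constructed in the corresponding direction of the topological hardness proof of \cite{BinucciGLLMNS24}, and show that, under the one-true-literal-per-clause hypothesis, it can be realized by a straight-line drawing function $\mathcal{D}$. That proof already establishes that every frame arising from this scheduling is planar; the remaining task is purely geometric: for each frame exhibit a planar straight-line drawing that agrees with the previous frame on all shared vertices and edges.

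I would build $\mathcal{D}$ incrementally in time. The first frame is drawn as an arbitrary planar straight-line embedding, which exists since the frame graph is planar. Assume inductively that $\Gamma_{t-1}$ is straight-line and planar. When passing to time $t$, some vertices simply disappear while a set $N_t$ of new vertices appears. The key technical ingredient is a face-placement claim: for every $v \in N_t$, the neighbors of $v$ already visible at $t-1$ lie on the boundary of a single face $f_v$ of $\Gamma_{t-1}$, and inside $f_v$ there is a point from which all these neighbors are visible by non-crossing straight segments. Placing $v$ at such a point and drawing its edges as straight segments maintains planarity and straight-line drawability. Multiple simultaneous appearances are handled by ordering the vertices of $N_t$ and applying the claim repeatedly.

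The main obstacle will be verifying this face-placement claim throughout the storyplan, since the wire gadgets induce a $K_{3,3}$ between each w-side and both the adjacent v-side of the variable gadget and the adjacent c-side of the clause gadget. I would deal with this by committing up front to a global layout: place every variable gadget and every clause gadget inside its own small disjoint disk, with each v-side and c-side arranged as three nearly collinear points on its boundary, and, for each variable-clause incidence, reserve a thin corridor connecting the two disks along which the three w-vertices can later be placed. The scheduling ensures that at any moment only one of the two v-sides of a variable and only one of the three c-sides of a clause is present, and the gadget pieces that remain (a $K_{3,3}$ minus a partite vertex, respectively an extended $K_{2,2,2}$ minus one c-side) are planar and admit drawings with ample convex faces along the reserved corridors. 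With these layout choices, the face-placement claim reduces to a small case analysis over the events (appearance of a v-side, a c-side, a w-side, or a special vertex $s_j$), each of which I expect to be straightforward. Chaining the incremental extensions yields a planar geometric storyplan $(A,\mathcal{D})$ for $G$.
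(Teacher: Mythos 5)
Your high-level strategy matches the paper's: fix a global layout up front (variable gadgets in their own regions, wire vertices along a line/corridors), keep one v-side per variable and the true literal's w-side permanently visible, and let the remaining vertices appear one at a time so that each vertex is placed exactly once. However, there is a genuine gap at the point you yourself flag as ``the main obstacle.'' Your face-placement claim --- that every newly appearing vertex finds a point in a face of the current frame from which all its already-placed neighbors are visible by non-crossing segments --- is precisely the statement that \emph{fails} in general for straight-line drawings; the paper's own Theorem~\ref{thm:mainthm} is a demonstration that planarity of every frame graph does not guarantee such visibility, so the claim cannot be asserted and must be verified frame by frame for a concrete layout and a concrete schedule. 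Saying the verification ``reduces to a small case analysis \dots each of which I expect to be straightforward'' leaves the entire technical content of the lemma unproven.

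Moreover, you propose to reuse the vertex scheduling from the topological proof of \cite{BinucciGLLMNS24} verbatim, but the paper explicitly \emph{changes} the storyplan for the clause gadgets rather than only re-drawing it: the extended $K_{2,2,2}$ together with its three incident $K_{3,3}$'s to wire sides is handled by a carefully ordered eight-frame schedule (special vertices and the first false literal's c-side with the three wire vertices appearing one per frame; then the second false literal's c-side; then the two non-adjacent $U_j$-vertices of the true literal's c-side in two separate frames), with explicit straight-line drawings for each frame and a case split on whether the true literal's wire side lies between the other two on the wire-line. Your proposal specifies neither this intra-gadget ordering nor the drawings, and without them the consistency of vertex positions across the eight frames (each vertex gets one fixed point for its whole lifetime) is exactly what remains to be shown. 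To complete the argument you would need to either reproduce such an explicit schedule-plus-drawing for the clause gadget, or prove a visibility lemma strong enough to cover it --- which is the part of the work the paper actually does.
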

\begin{proof}
    Let $L$ consist of the literals that appear positively in the satisfying assignment. That is, for each clause $C_j$ in $\phi$, $L$ contains exactly one literal $\ell_{ij}$ for some $i\in [n]$.
    We provide a planar geometric storyplan.
    The drawing of the variable-, and wire-gadgets will essentially be the same as in \cite{BinucciGLLMNS24}. We will adapt the storyplan for the clause gadgets.
    First, we have a set of vertices $V_f$ that are visible in every frame of the storyplan. For each literal $\ell_{ij}\in L$, the corresponding vertices of $W(\ell_{ij})$ will always be visible. Further, for each variable $x_i$ that is set to true (false), the set $B_i$ ($A_i$) will always be visible. 
    In the first frame, all the vertices in $V_f$ will appear. Note that the corresponding induced graph does not contain any edges. 
    
    \begin{figure}
        \centering
        \includegraphics{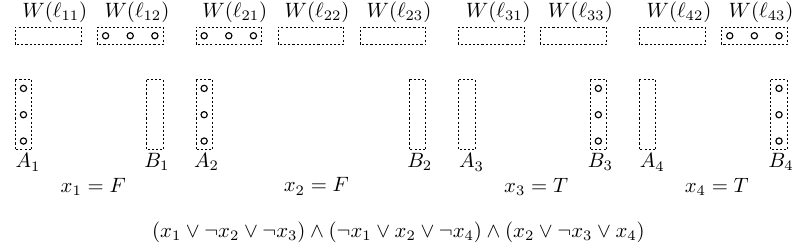}
        \caption{The drawing of the variable-, and wire-gadgets in the first frame.}
        \label{fig:vwgadgets}
    \end{figure}
    We proceed by showing how the vertices of the vertex-, and wire-gadgets are embedded.  See \cref{fig:vwgadgets} where the drawing of the first frame is shown, and the regions where the not-yet drawn vertices will appear for the variable- and wire-gadgets are sketched. The wire-gadget vertices are all drawn on a horizontal \emph{wire-line}, ordered based on the variables they are connected with. Hence, wire gadgets that connect to the variable $x_1$ are placed at the left, followed by wire gadgets for $x_2$, and so on. The vertices corresponding to a single wire gadget are consecutive. Further, the variable gadgets are drawn as in \cref{fig:vwgadgets}, each on two vertical lines facing each other, the $A$-vertices on the left, the $B$-vertices on the right. All wire gadgets for a variable $x_i$ are enclosed horizontally between $A_i$ and $B_i$.

    \begin{figure}
        \centering
        \includegraphics{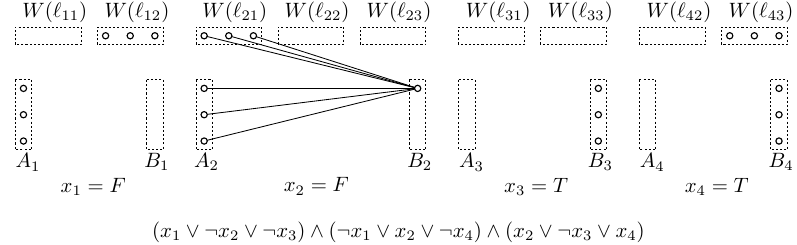}
        \caption{One of the frames drawing the remaining vertices of vertex-gadgets.}
        \label{fig:connectevw}
    \end{figure}
    In the next sequence of frames, we can draw all variable vertices that have not appeared yet. Hence, let $x_i$ be some true (false) variable, and let $v$ be some vertex of $A_i$ ($B_i$). We create a single frame in which $v$ appears. As $v$ has then co-occurred with all of its adjacent vertices in $G$, $v$ can disappear in the next frame again. Refer to \cref{fig:connectevw} for one such frame.

    \begin{figure}
        \centering
        \includegraphics[page=1]{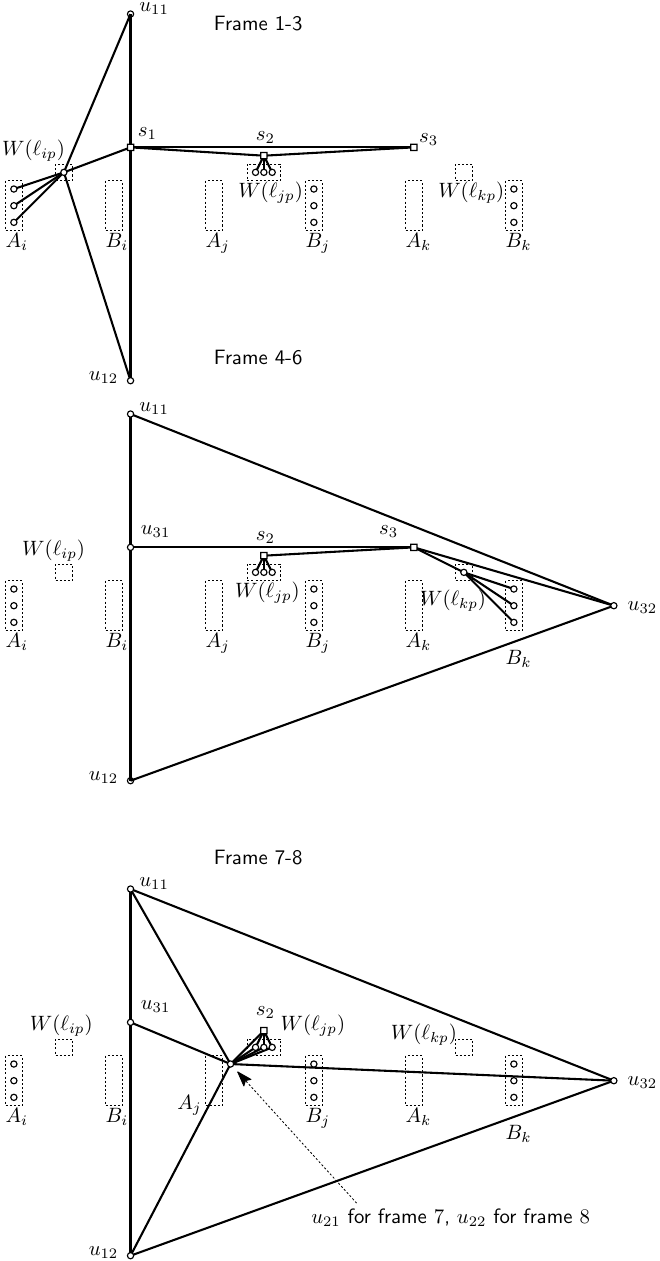}
        \caption{Drawing the clause-gadget if the true literal is in the middle.}
        \label{fig:clausegadgetmiddle}
    \end{figure}
    \begin{figure}
        \centering
        \includegraphics[page=2]{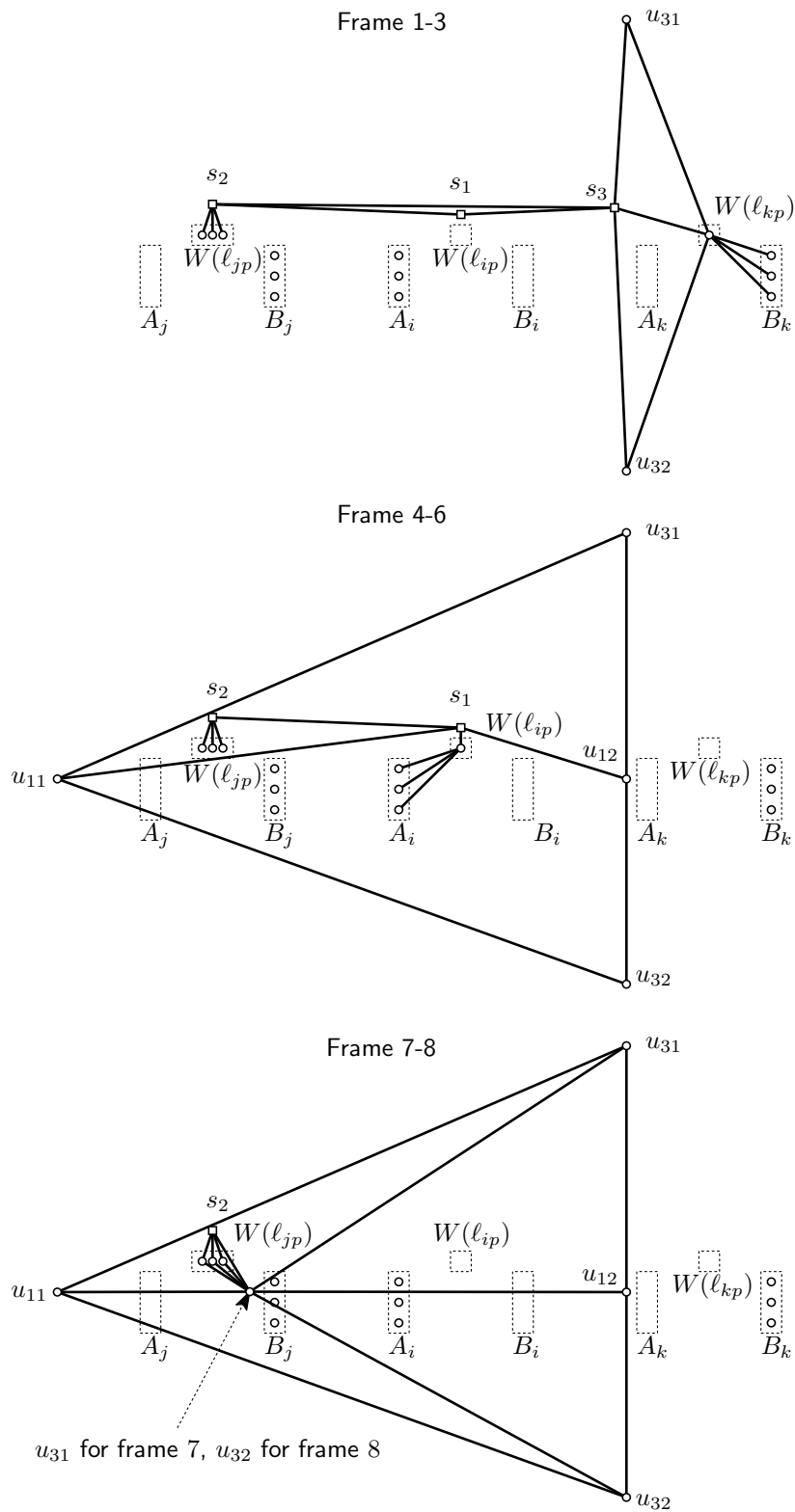}
        \caption{Drawing the clause-gadget if the true literal is on the left. If it is right, the frames can be mirrored.}
        \label{fig:clausegadgetleft}
    \end{figure}
    Until this point, the storyplan was the same as in \cite{BinucciGLLMNS24}. It remains to draw the clause gadgets and the missing wire gadgets. In \cite{BinucciGLLMNS24}, non-straight edges were used. Thus, we have to present a way of drawing the clause gadgets with straight lines. Consider a clause $C_p$ with literals $\ell_{ip},\ell_{jp},\ell_{kp}$, where $W(\ell_{ip})$ is connected to the first c-side of $K(C_p)$, $W(\ell_{jp})$ to the second, and $W(\ell_{kp})$ to the third.
    As an example, let us assume that $C_p=x_i\lor x_j\lor \lnot x_k$ and further assume that $x_i$ is false, and $x_j$ and $x_k$ are true. It will become clear that the construction also works for different clauses that use different combinations of positive and negative literals. Notice that $\ell_{jp}$ is the true literal. Now there are two cases: either $W(\ell_{jp})$ is between $W(\ell_{ip})$ and $W(\ell_{kp})$ on the wire-line, or it is left or right. For both cases, we present eight frames to draw the clause gadget for $C_p$ and the remaining wire vertices for $W(\ell_{ip})$ and $W(\ell_{kp})$ in \cref{fig:clausegadgetmiddle,fig:clausegadgetleft}, respectively.
    Note that in both figures, the drawn edges are the only visible edges in the corresponding frames.
    For both cases, in the first 3 frames all special vertices are drawn.     
    Further, one of the c-sides for a false literal is drawn -- $W(\ell_{ip})$ in \cref{fig:clausegadgetmiddle}, and $W(\ell_{ik})$ in \cref{fig:clausegadgetleft}. Each of the three frames shows a different wire-vertex for this literal. 
    After these three frames, the first special vertex can disappear and all the wire-vertices for the first false literals have co-occurred with all of their adjacent vertices in $G$. Next, we essentially do the same for the second false literal in frames 4--6. However, we have edges between the two $K_{2,2,2}$ vertices of the c-sides of the two false literals. We make sure that the four-cycle spanned by these vertices encloses the whole construction.
    The last two frames each represent one of the $K_{2,2,2}$-vertices of the c-side corresponding to the true literal. As these are not connected, we can have two frames in which exactly one of them appears and disappears again afterward. Hence, we have drawn all the vertices of the clause gadget and all their connections, and each vertex that appeared in one of the eight frames has co-occurred with all its adjacent vertices in $G$.
    Note that it does not matter in the construction how $C_p$ looks like with respect to its literals being positive or negative -- in the corresponding frames we could have connected the wire-vertices to either the $A$-vertices or the $B$-vertices. 
    
    Note that the drawings given in \cref{fig:clausegadgetmiddle} and \cref{fig:clausegadgetleft} are exemplary. The distances between the wire gadgets on the wire-line might be different. However, it is easy to see that the same construction will work with slightly different drawings of the vertices that appear in the eight frames, but with the same underlying topological embeddings.
\end{proof}
We obtain the following by combining \cref{corr:oldirection} and \cref{lemma:newdirection}.
\begin{restatable}{theorem}{hardness}\label{thm:hardnessthm}
    Deciding whether a graph admits a planar geometric storyplan is \NP-hard. 
\end{restatable}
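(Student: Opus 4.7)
The plan is to prove \NP-hardness by a polynomial-time many-one reduction from \probname{One-In-3SAT}, reusing precisely the gadget construction of~\cite{BinucciGLLMNS24}. Given a formula $\phi$ on $n$ variables and $m$ clauses, I build $G$ from one variable gadget $K(x_i)$ per variable, one extended $K_{2,2,2}$ clause gadget $K(C_p)$ per clause, and one wire gadget $W(\ell_{ij})$ per literal occurrence, wired together as described in the preceding section. A quick count shows $|V(G)|,|E(G)|\in O(n+m)$, so the reduction is clearly polynomial-time.

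For correctness, I would establish the biconditional that $G$ admits a planar geometric storyplan if and only if $\phi$ has a satisfying assignment in which each clause contains exactly one true literal. The $(\Rightarrow)$ direction requires no new work: every planar geometric storyplan is \emph{a fortiori} a planar (topological) storyplan, so I can invoke the corresponding implication from~\cite{BinucciGLLMNS24} verbatim; this is exactly the content of \cref{corr:oldirection}. The $(\Leftarrow)$ direction is the substantive and only novel part of the proof, and it is supplied by \cref{lemma:newdirection}, which explicitly converts a one-in-three satisfying assignment into a planar storyplan with straight-line edges, replacing the curved edges of the construction of~\cite{BinucciGLLMNS24} by straight-line segments that respect the same underlying planar embedding of each frame.

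Putting the two implications together gives the biconditional, and \NP-hardness of \cref{problem:geomplanar} follows immediately from \NP-hardness of \probname{One-In-3SAT}. The single genuine difficulty of the overall argument has been isolated in \cref{lemma:newdirection}: one must geometrically realise the clause gadget for every possible arrangement of its three incident wire gadgets along the wire-line, while ensuring that the four-cycle spanned by the special vertices of the two false-literal c-sides encloses the entire construction so that the remaining frames can still be drawn planarly and with straight lines. Once those straight-line gadget drawings are in hand, the final assembly for \cref{thm:hardnessthm} reduces to a one-line combination of \cref{corr:oldirection} and \cref{lemma:newdirection}.
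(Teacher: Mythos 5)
Your proposal follows the paper's argument exactly: the reduction from \probname{One-In-3SAT} reuses the gadget construction of~\cite{BinucciGLLMNS24}, the forward direction is inherited via \cref{corr:oldirection}, and the only new work is the straight-line realisation of \cref{lemma:newdirection}, after which the theorem is the one-line combination of the two. This matches the paper's proof of \cref{thm:hardnessthm} in both structure and substance.
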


\section{Conclusion}
We have shown that not all graphs that admit a planar storyplan also admit a planar geometric storyplan, yet the decision problem remains \NP-hard.
A natural open question is whether the problem remains in \NP, or whether it is complete for some other complexity class, such as $\exists\mathbb{R}$.

For further open problems on storyplans in general, we refer to those given in \cite{BinucciGLLMNS24} and \cite{FialaFLWZ24}.

\bibliography{literature}

\end{document}